\newtheorem{thm}{Theorem}
\begin{document}
%
\title{Approximate Message Passing with Unitary Transformation}
%
%

\author{Qinghua~Guo~\IEEEmembership{}
        and~Jiangtao~Xi~\IEEEmembership{}
\thanks{The authors are with the School of Electrical, Computer and Telecommunications Engineering, University of Wollongong,
NSW, 2500, Australia, e-mail: \{qguo,jiangtao\}@uow.edu.au.}
}

%
%

\markboth{}%
{Shell \MakeLowercase{\textit{et al.}}: Bare Demo of IEEEtran.cls for Journals}
%



\maketitle

\begin{abstract}
Approximate message passing (AMP) and its variants, developed based on loopy belief propagation, 
are attractive for estimating a vector $\textbf{x}$ from a noisy version of $\textbf{z}=\textbf{A}\textbf{x}$, 
which arises in many applications. For a large $\textbf{A}$ with i. i. d. elements, AMP can be characterized by the state evolution and exhibits fast convergence. However, it has been shown that, AMP may easily diverge for a generic $\textbf{A}$. In this work, we develop a new variant of AMP based on a unitary transformation of the original model (hence the variant is called UT-AMP), where the unitary matrix is available for any matrix $\textbf{A}$, e.g., the conjugate transpose of the left singular matrix of $\textbf{A}$, or a normalized DFT (discrete Fourier transform) matrix for any circulant $\textbf{A}$. We prove that, in the case of Gaussian priors, UT-AMP always converges for any matrix $\textbf{A}$. It is observed that UT-AMP is much more robust than the original AMP for `difficult' $\textbf{A}$ and exhibits fast convergence.

A special form of UT-AMP with a circulant $\textbf{A}$ was used in our previous work [\ref{ampeq}] for turbo equalization. This work extends it to a generic $\textbf{A}$,
and provides a theoretical investigation on the convergence.

\end{abstract}

\begin{IEEEkeywords}
Belief propagation, approximate message passing (AMP), convergence, singular value decomposition (SVD).
\end{IEEEkeywords}

%
\IEEEpeerreviewmaketitle

\section{Introduction}
\IEEEPARstart{A}{proximate} message passing, developed based on loopy belief propagation, is an efficient approach to the estimation of a vector $\textbf{x}$ with independent elements $\{x_i\sim p(x_i)\}$ in the following model
\begin{equation}
\mathrm{\textbf{y}=\textbf{Ax}+\textbf{n}} \label{a1}
\end{equation}
where $\textbf{A}$ is a known matrix with size $M \times N$, the length of $\textbf{x}$ is $N$, $\textbf{y}$ denotes a length-$M$ observation vector and $\textbf{n}$ is a length-$M$ white Gaussian noise vector with zero mean and covariance matrix $\sigma^2\textbf{I}$ [\ref{ampa}]-[\ref{ampca}]. AMP was originally developed for compressive sensing based on model (\ref{a1}) [\ref{ampa}]-[\ref{ampc}], and then was extended to generalized AMP (GAMP) to accommodate more general distribution $p(y_i|(\textbf{A}\textbf{x})_i)$ which may not be Gaussian (where $y_i$ and $(\textbf{A}\textbf{x})_i$ denotes the $i$-th element in $\textbf{y}$ and $(\textbf{A}\textbf{x})$, respectively) [\ref{gampb}], [\ref{gampa}].
For a large $\textbf{A}$ with i.i.d. elements, AMP exhibits fast convergence which can be characterized by the state
evolution [\ref{ampa}], [\ref{gampa}]. However, for a generic $\textbf{A}$, the convergence of AMP cannot be
guaranteed. It has been shown that AMP may diverge for a benign matrix $\textbf{A}$ and it can
easily diverge for a `difficult' matrix A, e.g., non-zero mean, rank-deficient, column-correlated, or ill-conditioned
$\textbf{A}$ [\ref{swampa}], [\ref{dgamp}].

The fixed points and convergence of AMP were analyzed for an arbitrary matrix $\textbf{A}$ in [\ref{anamp}] and [\ref{anampc}] . Reference [\ref{anamp}] provides sufficient conditions for the convergence of AMP in the case of Gaussian priors $\{p(x_i)\}$. The convergence condition is closely related to the peak-to-average ratio of the squared singular values of a certain normalized $\textbf{A}$ for vector stepsize AMP algorithm, and is closely related to the peak-to-average ratio of the squared singular values of $\textbf{A}$ for scalar stepsize AMP algorithm. Damped AMP algorithms were proposed and the convergence can be guaranteed with sufficient damping, but the amount of damping 
grows with the peak-to-average ratio [\ref{anamp}]. Adaptive damping and mean removal mechanisms were introduced to (G)AMP in [\ref{dgamp}] to enhance the convergence speed. Compared to original AMP, swept AMP (SwAMP) in [\ref{swampa}], [\ref{swamp}] is much more robust to difficult $\textbf{A}$. However, SwAMP updates the relevant estimates sequentially (in contrast, AMP updates them in parallel), which restricts fast implementations. The global convergence of AMP with a generic $\textbf{A}$ for a generic prior $p(x_i)$ has not been understood [\ref{anamp}].

In this work, we present a new variant of AMP, which is developed based on the following unitary transformation 
of (\ref{a1})
\begin{equation}
\textbf{r}=\bm{\Lambda} \textbf{V} + \textbf{w} \label{wa}
\end{equation}
where $\textbf{r}=\textbf{U}^H\textbf{y}$, $\textbf{w}=\textbf{U}^H\textbf{n}$, and
\begin{equation}
\textbf{A}=\textbf{U}\bm{\Lambda}\textbf{V} \label{aw}
\end{equation}
with $\textbf{U}$ and $\textbf{V}$ being unitary matrices and $\bm{\Lambda}$ being a rectangular diagonal matrix. We note that, as $\textbf{U}^H$ is a unitary matrix, $\textbf{w}$ is still a zero mean Gaussian noise vector with the same covariance matrix as $\textbf{n}$ in (\ref{a1}). Eqn. (\ref{aw}) holds for any $\textbf{A}$ through the singular value decomposition (SVD). It is worth mentioning that, any circulant matrix $\textbf{A}$ ($M=N$) can be unitarily diagonalized by a discrete Fourier transform (DFT) matrix, so $\textbf{U}$ and $\textbf{V}$ can simply be the normalized DFT matrix and its inverse. In addition, $\textbf{r}$ and (the diagonal elements of) $\bm{\Lambda}$ can be calculated with the fast Fourier transform (FFT). A new variant of AMP is then developed based on model (\ref{wa}), which, for convenience, is called UT-AMP (where UT stands for unitary transformation) in this paper. 
It is interesting that, although unitary transformation does not change the singular values of $\textbf{\textbf{A}}$, 
we prove that UT-AMP converges for any $\textbf{A}$ in the case of Gaussian priors. Moreover, we show that the 
convergence speed of UT-AMP is related to a scalar $\alpha$ (see Theorem 1 and its proof). It is observed that UT-AMP is much more robust than the original AMP algorithms and exhibits fast convergence. It is noted that the SVD required for a non-circulant $\textbf{A}$ only needs to be carried out once, so UT-AMP is particularly suitable for applications with a fixed $\textbf{A}$ (e.g., turbo MIMO detection with/without quasi-static channels in communications). For applications with model (\ref{a1}) with a circulant $\textbf{A}$ (e.g., block transmission with cyclic prefix in communications), the unitary transformation can be efficiently performed with FFT, which makes UT-AMP very attractive, e.g., in equalization to combat intersymbol interference, as shown in our previous work [\ref{ampeq}].

Notations: Bold lowercase letters, e.g., $\textbf{c}$, are used to denote column vectors, and bold upper case letters, e.g., \textbf{C}, are used to denote matrices. The $i$-th element
in vector $\textbf{c}$ is denoted by $c_i$.  We use $\textbf{c}\cdot \textbf{d}$ and $\textbf{c}./ \textbf{d}$ to denote the elementwise product and division between two vectors $\textbf{c}$ and $\textbf{d}$, respectively. $\mathrm{|\textbf{C}|^2}$ represents the elementwise magnitude squared operation for matrix $\textbf{C}$. $\textbf{1}$, $\textbf{0}$ and $\textbf{I}$ represent an all-one column vector, an all-zero column vector, and an identity matrix with proper sizes depending on the context. The conjugate transpose is denoted by the superscript ``$^H$''.

\section {AMP with vector stepsizes and scalar stepsizes}

To facilitate comparisons with UT-AMP, we include the vector stepsize AMP (Algorithm 1) and the scalar stepsize AMP (Algorithm 2) [\ref{gampa}] in this section. In vector stepsize AMP, the function
$g_x(\textbf{q}, \bm{\tau}_q)$ returns a column vector whose $i$-th element, denoted by $[g_x(\textbf{q}, \bm{\tau}_q)]_i$, is given by     \begin{equation}
[g_x(\textbf{q}, \bm{\tau}_q)]_i=\frac{\int x_i p(x_i)\mathcal{N}(x_i;q_i,\tau_{q_i})dx_i}{\int p(x_i)\mathcal{N}(x_i;q_i,\tau_{q_i})dx_i}  \label{ac}
\end{equation}
where $\mathcal{N}(x_i;q_i,\tau_{q_i})$ denotes a Gaussian distribution with $x_i$ as random variable, $q_i$ as mean, and $\tau_{q_i}$ as variance. Eqn. (\ref{ac}) can be interpreted as
the MMSE (minimum mean square error) estimation of $x_i$ based on the following model
\begin{equation}
q_i=x_i + \varpi \label{ad}
\end{equation}
where $x_i \sim p(x_i)$ and $\varpi$ is a Gaussian noise with mean zero and variance $\tau_{q_i}$.The function $g'_x(\textbf{q}, \bm{\tau}_q)$ returns a column vector, and the $i$-th element is denoted by $[g'_x(\textbf{q}, \bm{\tau}_q)]_i$ where the derivative is with respect to $q_i$. It is not hard to show that $\tau_{q_i}[g'_x(\textbf{q}, \bm{\tau}_q)]_i$ is the
a posteriori variance of $x_i$ with model (\ref{ad}). Note that $g_x(\textbf{q}, \bm{\tau}_q)$ can also be changed for MAP (maximum
a posteriori) estimation of $\textbf{x}$ [\ref{gampa}].

Scalar stepsize AMP can be obtained from vector stepsize AMP by forcing the elements of each variance vector to be 
the same, so that the multiplications of a matrix with a vector in updating $\bm{\tau}_p$ and $\bm{\tau}_q$ 
are avoided (compare Lines 1 and 5 in both algorithms). The function $g_x(\textbf{q}, {\tau}_q)$ is the same as $g_x(\textbf{q}, \bm{\tau}_q)$ in vector stepsize AMP except that all the Gaussian distributions $\{\mathcal{N}(x_i; q_i, \tau_q)\}$ share the same variance $\tau_q$.

\begin{algorithm}
\caption{Vector Stepsize AMP}
Initialize $\bm{\tau}^{(0)}_x$ (with elements larger than 0) and $\textbf{x}^{(0)}$. Set $\textbf{s}^{(-1)}=\textbf{0}$ and $t=0$.

Repeat

1. ${\bm{\tau}_p=|\textbf{A}|^2\bm{\tau}^t_x}$

2. ${\textbf{p}=\textbf{Ax}^t-\bm{\tau}_p \cdot \textbf{s}^{t-1}}$

3. ${\bm{\tau}_s=\textbf{1}./(\bm{\tau}_p+\sigma^2 \textbf{1})}$

4. ${\textbf{s}^{t}=\bm{\tau}_s \cdot (\textbf{y}-\textbf{p})}$

5. $\textbf{1}./\bm{\tau}_q=|\textbf{A}^H|^2\bm{\tau}_s$

6. ${\textbf{q}=\textbf{x}^t+\bm{\tau}_q \cdot \textbf{A}^H\textbf{s}^t}$

7. ${\bm{\tau}_x^{t+1}=\bm{\tau}_q \cdot g'_x(\textbf{q},\bm{\tau}_q)}$

8. ${\textbf{x}^{t+1}=g_x(\textbf{q},\bm{\tau}_q)}$

9. $t=t+1$

Until terminated

\end{algorithm}

\begin{algorithm}
\caption{Scalar Stepsize AMP}
Initialize ${\tau}^{(0)}_x>0$ and $\textbf{x}^{(0)}$. Set $\textbf{s}^{(-1)}=\textbf{0}$ and $t=0$.

Repeat

1. ${{\tau}_p=(1/M)|\textbf{A}|_F^2{\tau}^t_x}$

2. ${\textbf{p}=\textbf{Ax}^t-\tau_p\textbf{s}^{t-1}}$

3. ${\tau}_s=1/({\tau}_p+\sigma^2 )$

4. ${\textbf{s}^{t}={\tau}_s (\textbf{y}-\textbf{p})}$

5. ${1/{\tau}_q=(1/N)|\textbf{A}^H|_F^2{\tau_s}}$

6. ${\textbf{q}=\textbf{x}^t+{\tau}_q\textbf{A}^H\textbf{s}^t}$

7. ${\tau}_x^{t+1}=({\tau}_q/N)\textbf{1}^H g'_x(\textbf{q},{\tau}_q)$

8. ${\textbf{x}^{t+1}=g_x(\textbf{q},{\tau}_q)}$

9. $t=t+1$

Until terminated

\end{algorithm}

\begin{algorithm}
\caption{UT-AMP}

Unitary transform: ${\textbf{r}=\textbf{U}^H\textbf{y}=\bm{\Lambda} \textbf{V}\textbf{x}+\textbf{w}}$, where $\mathrm{\textbf{A}}=\mathrm{\textbf{U}\bm{\Lambda} \textbf{V}}$.

Define vectors $\bm{\lambda}_p = \bm{\Lambda} \bm{\Lambda}^H \textbf{1}$ and $\bm{\lambda}_s = \bm{\Lambda}^H\bm{\Lambda} \textbf{1}$.

Initialize ${\tau}^{(0)}_x>0$ and $\textbf{x}^{(0)}$. Set $\textbf{s}^{(-1)}=\textbf{0}$ and $t=0$.

Repeat

1. ${\bm{\tau}_p={\tau}^t_x}\bm{\lambda}_p$

2. ${\textbf{p}=\bm{\Lambda} \textbf{V}\textbf{x}^t-\bm{\tau}_p \cdot \textbf{s}^{t-1}}$

3. ${\bm{\tau}_s=\textbf{1}./(\bm{\tau}_p+\sigma^2 \textbf{1})}$

4. ${\textbf{s}^{t}=\bm{\tau}_s \cdot (\textbf{r}-\textbf{p})}$

5. ${1}/{\tau}_q=(1/N)\bm{\lambda}_s^H \bm{\tau}_s$

6. $\textbf{q}=\textbf{x}^t+ {\tau}_q (\textbf{V}^H\bm{\Lambda}^H\textbf{s}^t)$

7. ${\tau}_x^{t+1}=({\tau}_q /N) \textbf{1}^H g'_x(\textbf{q},{\tau}_q)$

8. ${\textbf{x}^{t+1}=g_x(\textbf{q},{\tau}_q)}$

9. $t=t+1$

Until terminated

\end{algorithm}

\section{UT-AMP and Its Convergence}

\subsection{Derivation of UT-AMP}

As any matrix $\textbf{A}$ can have the decomposition $\textbf{A}=\textbf{U} \bm{\Lambda} \textbf{V}$, we first perform a unitary transformation with $\textbf{U}^H$ to (\ref{a1}), yielding
\begin{equation}
{\textbf{r}=\textbf{U}^H\textbf{y}=(\bm{\Lambda} \textbf{V})\textbf{x}+\textbf{w}} \label{newa} \end{equation}
where $\bm{\Lambda}$ is an $M \times N$ rectangular diagonal matrix. Then the vector stepsize AMP can be applied to (\ref{newa}) where the system matrix becomes a special matrix $\bm{\Lambda} \textbf{V}$. Note that \begin{equation}
|\textbf{C}|^2 \textbf{d}=(\textbf{C}~Diag(\textbf{d})~\textbf{C}^H )_{D}\textbf{1} \label{ea} \end{equation}
where $Diag(\textbf{d})$ returns a diagonal matrix with the elements of $\textbf{d}$ on its diagonal, and $(\textbf{B})_D$ returns a diagonal matrix by forcing the off-diagonal elements of $\textbf{B}$ to zero.
Now suppose we have a variance vector $\bm{\tau}_x^t$. According to Line 1 in vector stepsize AMP and using (\ref{ea}), we have
\begin{equation}
\bm{\tau}_p = (\bm{\Lambda}\textbf{V}~Diag(\bm{\tau}_x^t)~\textbf{V}^H \bm{\Lambda}^H)_{D}\textbf{1}. \label{eb} \end{equation}
In attempting to reduce the computational complexity, we can find that if $\bm{\tau_x}^t$ has a form of $\gamma \textbf{1}$, the calculation of (\ref{eb}) can be significantly reduced.
This motivates the replacement of $\bm{\tau_x}^t$ with $\tau_x^t \textbf{1}$  where $\tau_x^t$ is the average of the elements of $\bm{\tau_x}^t$. So (\ref{eb}) is reduced to
\begin{equation}
\bm{\tau}_p = \tau_x^t \bm{\Lambda}\bm{\Lambda}^H\textbf{1} \label{eb1} \end{equation}
which is Line 1 in UT-AMP. Lines 2, 3 and 4 in UT-AMP can be obtained according to Lines 2, 3, 4 in vector stepsize AMP by simply replacing $\textbf{A}$ with $\bm{\Lambda} \textbf{V}$.
According to (\ref{ea}) again, Line 5 in vector stepsize AMP with matrix $\bm{\Lambda} \textbf{V}$ can be represented as
\begin{equation}
\textbf{1}./\bm{\tau}_q = (\textbf{V}^H \bm{\Lambda}^H~Diag(\bm{\tau_p})~\bm{\Lambda}\textbf{V})_{D}\textbf{1}.  \label{eg}
\end{equation}
In order to reduce the computational complexity, we can replace the diagonal matrix $\bm{\Lambda}^H~Diag(\bm{\tau_p})~\bm{\Lambda}$ in (\ref{eg}) with a scaled identity matrix $\rho \textbf{I}$ where
$\rho$ is the average of the diagonal elements of $\bm{\Lambda}^H~Diag(\bm{\tau_p})~\bm{\Lambda}$, i.e.,
\begin{equation}
\rho= (1/N) \textbf{1}^H \bm{\Lambda}^H \bm{\Lambda} \bm{\tau_p}.
\end{equation}
Hence (\ref{eg}) is reduced to Line 5 in UT-AMP.
Line 6 can be obtained from Line 6 in vector stepsize AMP by replacing $\textbf{A}$ with $\bm{\Lambda} \textbf{V}$.
Compared with Line 7 in vector stepsize AMP, an additional average operation is performed in Line 7 in UT-AMP to meet the requirement of a scalar $\tau^t_x$ in Line 1.
We note that the average operation is not necessarily in Line 7 as we can also put the additional average operation in Line 1. Line 8 in UT-AMP is the same as Line 8 in vector
stepsize AMP except that $\tau_q$ is a scalar.

\textbf{Remarks}:
\begin{itemize}
  \item One may try to get another variant of AMP by applying the scalar stepsize AMP to model (\ref{newa}), i.e., replacing $\textbf{A}$ with $\textbf{U}^H\textbf{A}$ and replacing $\textbf{y}$ with $\textbf{r}=\textbf{U}^H\textbf{y}$ in scalar stepsize AMP. It is interesting that the obtained algorithm will remain exactly the same as the original scalar stepsize AMP as $\textbf{U}^H$ will be canceled out 
  in scalar stepsize AMP. This means that unitary transformation have no impact on the convergence of scalar stepsize AMP.
  \item By the name, in vector stepsize AMP, $\bm{\tau}_x^t$, $\bm{\tau}_p$, $\bm{\tau}_s$, and $\bm{\tau}_q$ are all vectors, and in scalar step size AMP, the corresponding $\tau_x^t$, $\tau_p$, $\tau_s$, and $\tau_q$ are all scalars. In contrast, UT-AMP has two scalars ${\tau_x^t}$ and $\tau_q$ and two vectors $\bm{\tau}_p$ and $\bm{\tau}_s$.
  \item If $\textbf{A}$ is a circulant matrix, UT-AMP is very attractive as $\textbf{U}$ and $\textbf{V}$ can be simply a DFT matrix 
  and its inverse, and the diagonal elements of $\bm{\Lambda}$ can be calculated with FFT. Moreover, the multiplications of matrix and vector in UT-AMP can be implemented with FFT as well, leading to very low complexity.
  \item If $\textbf{A}$ is non-circulant and its SVD required in UT-AMP is available, the complexity per iteration of the UT-AMP is lower than that of vector stepsize AMP as the multiplications of matrix with vector are avoided in Lines 1 and 5. The complexity of UT-AMP is slightly higher than that of the scalar stepsize AMP due to the vector operations in Lines 1 and 5. Hence, UT-AMP is particularly suitable for applications with fixed $\textbf{A}$ as SVD only needs to be carried out once.
  \item Most importantly, it is observed that UT-AMP is robust to `difficult' matrix $\textbf{A}$ and exhibits fast convergence.
\end{itemize}

\subsection {Convergence of UT-AMP}

\begin{thm}
UT-AMP converges for any $\textbf{A}$ in the case of Gaussian priors.
\end{thm}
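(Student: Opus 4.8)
The plan is to exploit the single structural fact that, for Gaussian priors, the denoiser $g_x(\cdot,\tau_q)$ is affine in its first argument and its derivative $g'_x(\cdot,\tau_q)$ is a constant depending only on $\tau_q$ (for a common prior variance $v$, one has $g'_x=v/(v+\tau_q)=:g'$). Two consequences of this drop out of Algorithm~3. First, because Line~7 reads $\tau_x^{t+1}=(\tau_q/N)\textbf{1}^H g'_x(\textbf{q},\tau_q)$ and $g'_x$ no longer depends on $\textbf{q}$, the variance quantities $(\tau_x^t,\bm{\tau}_p,\bm{\tau}_s,\tau_q)$ evolve through Lines~1,3,5,7 as an \emph{autonomous scalar recursion} $\tau_x^{t+1}=F(\tau_x^t)$, fully decoupled from the means $\textbf{x}^t,\textbf{s}^t$. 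Second, with the variances treated as known, Lines~2,4,6,8 are affine in $(\textbf{x}^t,\textbf{s}^{t-1})$, so the mean iteration is a linear system whose convergence is governed by a spectral radius.

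I would first analyze the scalar map $F$. Writing $\beta_i^t=\tau_x^t\lambda_i^2/(\tau_x^t\lambda_i^2+\sigma^2)\in[0,1)$ with $\lambda_i$ the singular values, Line~5 gives $1/\tau_q^t=(1/(N\tau_x^t))\sum_i\beta_i^t$ and Line~7 gives $\tau_x^{t+1}=\tau_q^t v/(v+\tau_q^t)$. One checks that $F$ is monotone increasing and bounded above by $v$, so the iterates are monotone and converge to a fixed point $\tau_x^*$, which I would argue is unique. The fixed point satisfies the identities $g'=\tau_x^*/\tau_q^*$ and $\tau_q^*/\tau_x^*=N/\sum_i\beta_i^*$, whence $g'(\tau_q^*/\tau_x^*)\beta_i^*=\beta_i^*$; this relation is what makes the later spectral computation collapse.

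Next I would diagonalize the mean recursion. Substituting $\tilde{\textbf{x}}^t=\textbf{V}\textbf{x}^t$ and using that the denoiser matrix is $g'\textbf{I}$ while $\bm{\Lambda}$, $\mathrm{Diag}(\bm{\tau}_p)$ and $\mathrm{Diag}(\bm{\tau}_s)$ are all diagonal, the unitary factors $\textbf{V}$ and $\textbf{V}^H$ cancel and Lines~2,4,6,8 split into \emph{independent two-dimensional affine recursions}, one per index $i$, in the state $(\tilde{x}_i^t,s_i^{t-1})$. Freezing the variances at their limit, each block has transition matrix $\textbf{M}_i$ with (using $\tau_{p,i}=\tau_x^*\lambda_i^2$ and $\tau_{s,i}=1/(\tau_x^*\lambda_i^2+\sigma^2)$) determinant $\det\textbf{M}_i=g'\beta_i^*$ and, after invoking the fixed-point identity above, trace $\mathrm{tr}\,\textbf{M}_i=g'$. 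The Schur conditions $|\det\textbf{M}_i|<1$, $1-\mathrm{tr}\,\textbf{M}_i+\det\textbf{M}_i>0$, $1+\mathrm{tr}\,\textbf{M}_i+\det\textbf{M}_i>0$ reduce to $g'\beta_i^*<1$, $(1-g')(1-\beta_i^*)>0$ and $1+g'(1+\beta_i^*)>0$, all of which hold strictly for every $\lambda_i\ge0$ and every $\sigma^2>0$. The eigenvalues solve $z^2-g'z+g'\beta_i^*=0$, so $\rho(\textbf{M}_i)=\sqrt{g'\beta_i^*}$ in the complex case and is bounded by $g'$ otherwise; setting $\alpha=\max_i\rho(\textbf{M}_i)<1$ identifies the scalar that controls the convergence rate.

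The conclusion is that the frozen mean system contracts geometrically at rate $\alpha<1$ \emph{independently of the singular values of} $\textbf{A}$, which is precisely why the unitary transformation buys unconditional convergence. I expect the main obstacle to lie in the transient rather than in the eigenvalue computation: the blocks $\textbf{M}_i$ depend on $\tau_x^t$, which is still moving while the means update, so the mean recursion is strictly a \emph{time-varying} linear system, not the frozen one. The cleanest remedy is to establish convergence of the scalar map $F$ first, so that $\tau_x^t\to\tau_x^*$, and then treat the mean iteration as an asymptotically autonomous linear system whose limiting dynamics are exponentially stable; a standard vanishing-perturbation argument for a uniformly contracting map then transfers convergence to $\textbf{x}^t$. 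Secondary points needing care are the uniqueness and global attractivity of the fixed point of $F$, and priors with non-identical variances, where the denoiser matrix is no longer a multiple of the identity and the clean $2\times2$ decoupling must be replaced by a spectral-radius bound on the full coupled system.
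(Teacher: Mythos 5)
Your proposal is correct, and it shares the paper's two-stage skeleton: first establish that the scalar variance $\tau_x^t$ converges to a fixed point $\tau_x$, then show the mean iteration in $(\textbf{s}^t,\textbf{x}^t)$ is linear with spectral radius bounded by $\alpha=\frac{1}{N}\sum_i\frac{\tau_x|\lambda_i|^2}{\tau_x|\lambda_i|^2+\sigma^2}<1$; your per-block quadratic $z^2-g'z+g'\beta_i^*=0$ is exactly the paper's $\eta^2-\alpha\eta+\alpha\beta_i=0$, since your fixed-point identity $g'=\tau_x/\tau_q$ makes $g'=\alpha$. The execution differs in three ways, all to your advantage. (i) The paper never decouples the dynamics: it assembles the full block matrix $\textbf{C}$ acting on $(\textbf{s}^{t-1},\textbf{x}^t)$ and evaluates $|\eta\textbf{I}-\textbf{C}|$ by a Schur-complement manipulation (with an $\omega\textbf{I}$ regularization to keep the pivot invertible), which forces separate treatments of $M=N$, $M>N$, $M<N$; your substitution $\tilde{\textbf{x}}^t=\textbf{V}\textbf{x}^t$ cancels the unitary factor inside the recursion itself and yields independent $2\times 2$ blocks, so the same spectrum (including the extra eigenvalues $0$ when $M>N$ and $\alpha$ when $M<N$) falls out with no case analysis. (ii) For variance convergence the paper only points to the argument of [\ref{anamp}], whereas you sketch a self-contained monotonicity argument. (iii) Most substantively, you flag that the mean recursion is time-varying while $\tau_x^t$ is still moving; the paper elides this by asserting $\tau_x^{t+1}=\tau_x^t=\tau_x$ ``after a certain number of iterations,'' which holds only asymptotically, so the asymptotically-autonomous / vanishing-perturbation step you propose is exactly what is needed to make that passage rigorous. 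Your caveat about non-identical prior variances is also apt: the paper's appendix nominally allows $Diag(\bm{\tau}_x^0)$, yet its update $\textbf{x}^{t+1}=\tau_x^{t+1}(\textbf{q}/\tau_q+\textbf{x}^0./\bm{\tau}_x^0)$ inserts the scalar average $\tau_x^{t+1}$ where per-component posterior variances belong, which is exact only in the identical-variance case --- precisely the regime where your $2\times 2$ decoupling applies, so the two proofs in fact cover the same ground.

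One minor algebraic slip in your sketch: the Schur--Jury quantity is $1-\mathrm{tr}\,\textbf{M}_i+\det\textbf{M}_i=1-g'(1-\beta_i^*)$, not $(1-g')(1-\beta_i^*)$; both are strictly positive when $g'<1$ and $\beta_i^*<1$, so the conclusion is unaffected.
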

\begin{proof}
See Appendix A.
\end{proof}

It can be seen from the proof of Theorem 1 that, the convergence speed of UT-AMP is related to a parameter $\alpha$ given in (\ref{appexr}) in Appendix A.

Similar to the original AMP, the convergence of UT-AMP for a generic prior is unknown, which remains as future work. 
It is also interesting to investigate the convergence of swept UT-AMP. 

It is observed that UT-AMP is robust to `difficult' $\textbf{A}$  e.g., non-zero mean, rank-deficient, column-correlated, or ill-conditioned $\textbf{A}$, under which the original AMP often diverges. The special form of UT-AMP with a circulant $\textbf{A}$ in the case of discrete priori distributions has been used in [\ref{ampeq}] for equalization, where the channel matrix $\textbf{A}$ is ill conditioned, and the use of original AMP will diverge.  Various numerical examples will be provided in the full version of this paper.

\section{Conclusion}
In this work, we have developed a new AMP variant UT-AMP for a generic matrix $\textbf{A}$. It has been shown that UT-AMP always converges in the case of Gaussian priors for any $\textbf{A}$. It is observed that UT-AMP is robust to difficult $\textbf{A}$ and exhibits fast convergence.


%

\appendices
\section{Proof of Theorem 1}

We assume that
\begin{equation}
p(\textbf{x}) \sim \mathcal{N} (\textbf{x};\textbf{x}^0, Diag(\bm{\tau}_x^0))
\end{equation}
where $\textbf{x}^0$ and $\bm{\tau}_x^0$ are the a priori mean vector and variance vector for $\textbf{x}$, respectively.


Similar to the proof in [\ref{anamp}], it can be proven that the variance $\tau_x^t$ of UT-AMP for any $\textbf{A}$ always converges to a fixed point denoted by $\tau_x$. Next, we prove the convergence of $\textbf{x}_t$.

Define a diagonal matrix
\begin{equation}
\textbf{D}=Diag(\bm{\tau}_s)=(\tau_x^t \bm{\Lambda} \bm{\Lambda}^H+\sigma^2 \textbf{I})^{-1}.
\end{equation}
Then, according to the UT-AMP algorithm, we have
\begin{equation}
\textbf{s}^t=\tau_x^t\textbf{D} \bm{\Lambda}\bm{\Lambda}^H\textbf{s}^{t-1}-\textbf{D}\bm{\Lambda}\textbf{V}\textbf{x}^t +\textbf{D}\textbf{r},
\end{equation}
\begin{eqnarray}
1/ \tau_q &=& (1/N)\textbf{1}^H \bm{\Lambda}^H \textbf{D} \bm{\Lambda} \textbf{1} \nonumber     \\
&=&\frac{1}{N}\sum_{i=1}^{min\{M,N\}} \frac{|\lambda_i|^2}{\tau_x^t|\lambda_i|^2+\sigma^2},
\end{eqnarray}
with $\lambda_i$ being the $(i, i)$-th elements of $\bm{\Lambda}$.
and
\begin{eqnarray}
\textbf{x}^{t+1}&=&\tau_x^{t+1}(\textbf{q}/\tau_q + \textbf{x}^0./ \bm{\tau}_x^0) \nonumber   \\
&=&(\tau_x^{t+1}/\tau_q) \textbf{x}^t + \tau_x^{t+1}\textbf{V}^H\bm{\Lambda}^H\textbf{s}^t + \tau_x^{t+1} \textbf{x}^0./ \bm{\tau}_x^0 \nonumber \\
&=&\tau_x^{t+1}\tau_x^t \textbf{V}^H\bm{\Lambda}^H \textbf{D}\bm{\Lambda}\bm{\Lambda}^H \textbf{s}^{t-1}  \nonumber \\
&&~~~~~+ (\alpha \textbf{I}-\tau_x^{t+1}\textbf{V}^H\bm{\Lambda}^H \textbf{D}\bm{\Lambda}\textbf{V})\textbf{x}^t + \textbf{b},
\end{eqnarray}
where $\textbf{b}$ is an appropriate vector. Define
\begin{equation}
\widetilde{\alpha}=\frac {\tau_x^{t+1}}{\tau_q} = \frac{1}{N}\sum_{i=1}^{min\{M,N\}} \frac{\tau_x^{t+1}|\lambda_i|^2}{\tau_x^t|\lambda_i|^2+\sigma^2}. \label{alpha}
\end{equation}
The iteration of $\textbf{x}^t$ and $\textbf{s}^t$ in UT-AMP can be described as
\begin{equation}
\left[
  \begin{array}{c}
    \textbf{s}^t \\
    \textbf{x}^{t+1} \\
  \end{array}
\right] =
\underbrace{\left[
  \begin{array}{cc}
    \textbf{C}_a & \textbf{C}_b \\
    \textbf{C}_c & \textbf{C}_d \\
  \end{array}
\right]}_\textbf{C}
\left[
  \begin{array}{c}
    \textbf{s}^{t-1} \\
    \textbf{x}^{t} \\
  \end{array}
\right] + \textbf{e}
\end{equation}
where $\textbf{e}$ is an appropriate vector. Matrix $\textbf{C}$ has two diagonal sub-matrices
\begin{equation}
\textbf{C}_a=\tau_x^t\textbf{D} \bm{\Lambda}\bm{\Lambda}^H,
\end{equation}
and
\begin{equation}
\textbf{C}_b=-\textbf{D}\bm{\Lambda}\textbf{V},
\end{equation}
and the other two sub-matrices can be represented as
\begin{equation}
\textbf{C}_c=\tau_x^{t+1}\tau_x^t \textbf{V}^H\bm{\Lambda}^H \textbf{D}\bm{\Lambda}\bm{\Lambda}^H,
\end{equation}
and
\begin{equation}
\textbf{C}_d=\widetilde{\alpha} \textbf{I}-\tau_x^{t+1}\textbf{V}^H\bm{\Lambda}^H \textbf{D}\bm{\Lambda}\textbf{V}.
\end{equation}

Next, we find the eigenvalues of matrix $\textbf{C}$, i.e., the roots of the following polynomial
\begin{equation}
h(\eta)=|\eta \textbf{I} - \textbf{C}|=\left|\begin{array}{cc}
                                  \eta \textbf{I} -\textbf{C}_a & \textbf{C}_b \\
                                  \textbf{C}_c& \eta \textbf{I} - \textbf{C}_d
                                \end{array}
\right| = 0. \label{appexa}
\end{equation}
We note that the identity matrices in (\ref{appexa}) have different sizes (i.e., the use of $\textbf{I}$ is abused for notation
simplification).
As $\textbf{C}_a$ is a diagonal matrix (with non-negative elements), a diagonal matrix $\omega \textbf{I}$ can be used to guarantee that $\eta\textbf{I}-\textbf{C}_a + \omega \textbf{I}$ is invertible. Define a new polynomial
\begin{equation}
h^a(\eta)=\left|\begin{array}{cc}
                                  \eta \textbf{I} -\textbf{C}_a +\omega \textbf{I} & \textbf{C}_b \\
                                  \textbf{C}_c& \eta \textbf{I} - \textbf{C}_d
                                \end{array}
\right|.
\end{equation}
Clearly the roots of $h^a(\eta)$ with $\omega =0$ are the eigenvalues of matrix $\textbf{C}$. It can be shown that $h^a(\eta)$ can be rewritten as
\begin{eqnarray}
h^a(\eta) \!\!\!\!\!\!&=&\!\!\!\!\!\!| \eta \textbf{I} -\textbf{C}_a +\omega \textbf{I}|\times|\eta \textbf{I} - \textbf{C}_d - \textbf{C}_c (\eta \textbf{I} -\textbf{C}_a +\omega \textbf{I})^{-1} \textbf{C}_b | \nonumber \\
\!\!\!\!\!\!&=&\!\!\!\!\!\!| \eta \textbf{I} -\tau_x^t\textbf{D} \bm{\Lambda}\bm{\Lambda}^H +\omega \textbf{I}| \times \nonumber \\
&&\!\!\!\!\!\!|\textbf{V}^H|\times|(\eta-\widetilde{\alpha}) \textbf{I} + \tau_x^{t+1}\bm{\Lambda}^H \textbf{D}\bm{\Lambda} + \tau_x^{t+1}\tau_x^t \bm{\Lambda}^H \textbf{D}\bm{\Lambda}\bm{\Lambda}^H \nonumber \\
&&(\eta \textbf{I} -\tau_x^t\textbf{D} \bm{\Lambda}\bm{\Lambda}^H +\omega \textbf{I})^{-1} \textbf{D}\bm{\Lambda}|\times|\textbf{V}|. \label{appexa1}
\end{eqnarray}
As $\textbf{V}$ is a unitary matrix, $|\textbf{V}^H|=|\textbf{V}|=1$. So they can be removed from (\ref{appexa1}). Note that $\bm{\Lambda}$ is a rectangular diagonal matrix with size $M \times N$ and all the matrices left in (\ref{appexa1}) are diagonal.

\textbf{Case 1}: $M=N$. In this case, $\bm{\Lambda}$ is a diagonal matrix. Define vector $\bm{\beta}=[\beta_1, ..., \beta_N]^T$ whose elements are the diagonal elements of $\tau_x^t\textbf{D} \bm{\Lambda}\bm{\Lambda}^H$, i.e.,
\begin{equation}
\beta_i=\frac{\tau_x^t|\lambda_i|^2}{\tau_x^t|\lambda_i|^2+\sigma^2}.
\end{equation}
where we can see that $0\leq\beta_i<1$. Hence (\ref{appexa1}) can be rewritten as
\begin{eqnarray}
h^a(\eta)\!\!\!\!\!\!&=&\!\!\!\!\!\!\prod_{i=1}^N (\eta-\beta_i+w)\big((\eta-\widetilde{\alpha}) + \frac{\tau_x^{t+1}}{\tau_x^{t}}(\beta_i +
\frac{\beta_i^2}{\eta-\beta_i+w}) \big) \nonumber \\
\!\!\!\!\!\!&=&\!\!\!\!\!\!\prod_{i=1}^N (\eta-\beta_i+w)\big((\eta-\widetilde{\alpha}) + \frac{\tau_x^{t+1}}{\tau_x^{t}}\beta_i\big)+\frac{\tau_x^{t+1}}{\tau_x^{t}}
{\beta_i^2}.
\end{eqnarray}
Letting $\omega=0$, we have
\begin{equation}
h(\eta)=\prod_{i=1}^N (\eta-\beta_i)\big((\eta-\widetilde{\alpha}) + \frac{\tau_x^{t+1}}{\tau_x^{t}}\beta_i\big)+\frac{\tau_x^{t+1}}{\tau_x^{t}}
{\beta_i^2}. \label{appexah}
\end{equation}
As the variance always converges, we have $\tau_x^{t+1}=\tau_x^{t}=\tau_x$ after a
certain number of iterations. Then (\ref{appexah}) can be reduced to
\begin{equation}
h(\eta)=\prod_{i=1}^N (\eta^2-{\alpha}\eta+\alpha\beta_i)
\end{equation}
where from (\ref{alpha})
\begin{equation}
\alpha= \frac{1}{N}\sum_{i=1}^{min\{M,N\}} \frac{\tau_x|\lambda_i|^2}{\tau_x|\lambda_i|^2+\sigma^2}. \label{appexr}
\end{equation}
which indicates that $0<\alpha<1$. Hence the eigenvalues are given by
\begin{equation}
\eta_{i(1,2)}=\frac{-{\alpha}\pm \sqrt{\alpha^2-4\alpha\beta_i}}{2}
\end{equation}
for $i=1, 2, ..., N.$ Recall that $0\leq\beta_i<1$. If $\alpha\geq\beta_i/4$, the eigenvalues are real, and it can be easily shown that
\begin{equation}
|\eta_i|\leq \alpha<1,
\end{equation}
where the equality holds when $\beta_i=0$.
If $\alpha<\beta_i/4$, the eigenvalues are complex valued, and it can be shown that
\begin{equation}
|\eta_i|\leq \alpha\beta_i<1.
\end{equation}

\textbf{Case 2}: $M>N$. In this case, $\bm{\Lambda}$ is a `tall' rectangular diagonal matrix. We define diagonal matrix
$\widetilde{{\bm{\Lambda}}}$ with size $N \times N$ as the upper part of $\bm{\Lambda}$, and define diagonal matrix $\widetilde{\textbf{D}}$ with size $N \times N$ as the upper left part of $\textbf{D}$ (whose size is $M \times M$). It is not hard to show that
\begin{eqnarray}
h^a(\eta)
\!\!\!&=&\!\!\!| \eta \textbf{I}_{M-N} +\omega \textbf{I}_{M-N}|\times| \eta \textbf{I} -\tau_x^t\widetilde{\textbf{D}} \widetilde{\bm{\Lambda}}\widetilde{\bm{\Lambda}}^H +\omega \textbf{I}| \nonumber \\
&&\times|(\eta-\alpha) \textbf{I} + \tau_x^{t+1}\widetilde{\bm{\Lambda}}^H \widetilde{\textbf{D}}\widetilde{\bm{\Lambda}} + \tau_x^{t+1}\tau_x^t \widetilde{\bm{\Lambda}}^H \widetilde{\textbf{D}}\widetilde{\bm{\Lambda}}\widetilde{\bm{\Lambda}}^H \nonumber \\
&&(\eta \textbf{I} -\tau_x^t\widetilde{\textbf{D}} \widetilde{\bm{\Lambda}}\widetilde{\bm{\Lambda}}^H +\omega \textbf{I})^{-1} \widetilde{\textbf{D}}\widetilde{\bm{\Lambda}}|.
\end{eqnarray}
After some manipulations, we have
\begin{equation}
h(\eta)=\eta^{M-N}\prod_{i=1}^N (\eta^2-\alpha\eta+\alpha\beta_i). \label{appexcb}
\end{equation}
Hence the eigenvalues are the same as those in Case 1 except that $M-N$ eigenvalues are zero.

\textbf{Case 3}: $M<N$. In this case, $\bm{\Lambda}$ is a `fat' rectangular diagonal matrix. Define diagonal matrix $\overline{\bm{\Lambda}}$ with size$M \times M$ as the left part of $\bm{\Lambda}$.  We can show that
\begin{eqnarray}
h^a(\eta)
\!\!\!&=&\!\!\!| \eta \textbf{I} -\tau_x^t{\textbf{D}} \overline{\bm{\Lambda}}\overline{\bm{\Lambda}}^H +\omega \textbf{I}| \times|(\eta-\alpha) \textbf{I} +
\tau_x^{t+1}\overline{\bm{\Lambda}}^H {\textbf{D}}\overline{\bm{\Lambda}} + \nonumber \\
&& \tau_x^{t+1}\tau_x^t \overline{\bm{\Lambda}}^H {\textbf{D}}\overline{\bm{\Lambda}}\overline{\bm{\Lambda}}^H (\eta \textbf{I} -\tau_x^t{\textbf{D}}
\overline{\bm{\Lambda}}\overline{\bm{\Lambda}}^H +\omega \textbf{I})^{-1} {\textbf{D}}\overline{\bm{\Lambda}}| \nonumber \\
&& \times | (\eta-\alpha) \textbf{I}_{N-M}|.
\end{eqnarray}
Then, we can have
\begin{equation}
h(\eta)=(\eta-\alpha)^{N-M}\prod_{i=1}^M (\eta^2-\alpha\eta+\alpha\beta_i).
\end{equation}
The eigenvalues are the same as those in Case 1 except that $N-M$ eigenvalues are $\alpha$.

The above shows that $|\eta_i| \leq \alpha$ for all the cases (noting that $\alpha\beta_i<\alpha$) and any matrix $\textbf{A}$. Because $\alpha$ is smaller than 1, the algorithm converges for any $\textbf{A}$.




\ifCLASSOPTIONcaptionsoff
  \newpage
\fi

\end{document}